\newcommand{\tr}[1]{\text{tr}\{{#1}\}}
\newcommand{\K}{\mathcal{K}}
\renewcommand{\H}{\mathcal{H}}
\newcommand{\N}{\mathbb{N}}
\newcommand{\R}{\mathbb{R}}
\newcommand{\C}{\mathbb{C}}
\newcommand{\Z}{\mathbb{Z}}
\newcommand{\E}{\mathbb{E}}
\newcommand{\unit}{\mathds{1}}
\newcommand{\bp}{\begin{proof}}
\newcommand{\ep}{\end{proof}}
\newcommand{\bdp}{\begin{dproof}}
\newcommand{\edp}{\end{dproof}}
\newcommand{\ra}{\rightarrow}
\renewcommand{\O}{\operatorname{O}}
\newcommand{\Sp}{\operatorname{Sp}}
\renewcommand{\sp}{\mathfrak{sp}}
\newcommand{\U}{\operatorname{U}}
\newcommand{\Mat}{\operatorname{M}}
\newcommand{\rmd}{\operatorname{d}}
\newcommand{\rmi}{\operatorname{i}}
\newcommand{\diag}{\operatorname{diag}}
\newtheorem{theorem}{Theorem}
\theoremstyle{definition}
\theoremstyle{definition}
\theoremstyle{definition}
\theoremstyle{definition}
\theoremstyle{definition}
\theoremstyle{definition}
\theoremstyle{definition}
\begin{document}
\title{Dynamical Decoupling and Homogenization of continuous variable systems}
\author{Christian Arenz, Daniel Burgarth} 
\address{Institute of Mathematics, Physics, and Computer Science\\ Aberystwyth University\\ Aberystwyth SY23 2BZ, UK}
\author{Robin Hillier}
\address{Department of Mathematics and Statistics\\ Lancaster University\\ Lancaster LA1 4YF, UK}
\date{\today}
\thanks{D.B. acknowledges support from the EPSRC Grant No. EP/M01634X/1 and fruitful discussions with Markus Kirschmer and Gunther Dirr.}

\begin{abstract}
For finite-dimensional quantum systems, such as qubits, a well established strategy to protect such systems from decoherence is dynamical decoupling.  However many promising quantum devices, such as oscillators, are infinite dimensional, for which the question if dynamical decoupling could be applied remained open. Here we first show that not every infinite-dimensional system can be protected from decoherence through dynamical decoupling. Then we develop dynamical decoupling for continuous variable systems which are described by quadratic Hamiltonians. We identify a condition and a set of operations that allow us to map a set of interacting harmonic oscillators onto a set of non-interacting oscillators rotating with an averaged frequency,  a procedure we call homogenization. Furthermore we show that every quadratic system-environment interaction can be suppressed with two simple operations acting only on the system. Using a random dynamical decoupling or homogenization scheme, we develop bounds that characterize how fast we have to work in order to achieve the desired uncoupled dynamics. This allows us to identify how well homogenization can be achieved and decoherence can be suppressed in continuous variable systems. 
\end{abstract}
\maketitle

\section{Introduction} 
Dynamical decoupling is a highly successful strategy to protect quantum systems from decoherence \cite{ErrorC}. Its particular strength is that it is applicable even if the details of the system-environment coupling are unknown.  Historically dynamical decoupling dates back to pioneering work in nuclear magnetic resonance (NMR) by U. Haeberlen and J. S. Waugh \cite{AvHamTheory}. In order to increase the resolution in NMR spectroscopy, pulse sequences were developed that coherently average out unwanted interactions \cite{AvHamTheory2}. Prominent examples are spin-echo techniques, such as the famous Hahn echo \cite{HahnEcho}, allowing us to measure relaxation times through applying a sequence of rotations on a spin and detecting the echo signal. In the context of suppressing decoherence and quantum information theory, the theoretical framework was developed by L. Viola and S. Lloyd \cite{LVioal1, LVioalRandD} in the late 90's. Over the years the efficiency of various decoupling schemes was studied and improved for several environmental models in \cite{DecouplingSpinB1, DecouplingSpinB2, DecouplingSpinBRand, Uhrig1, Uhrig2, Uhrig3}. Many experiments, such as \cite{NVcentre, classicalnoise, QubitinSolid}, demonstrate the applicability of dynamical decoupling in an impressive way by prolonging coherence times a few orders of magnitude. Additionally dynamical decoupling can be combined with the implementation of quantum gates, which makes it a viable option to error correction \cite{DCandEC, DCQgates}. However, when it comes to infinite-dimensional quantum systems, such as quantum harmonic oscillators, a general framework for dynamical decoupling is missing in the literature. A first step towards this direction was done in \cite{Vitali} by investigating a specific system-environment model and identifying an operation that allows to suppress decoherence.

In this article we consider a broader class of continuous variable systems by investigating dynamical decoupling for systems that are described by Hamiltonians quadratic in the operators $\hat{x}$ and $\hat{p}$ \cite{GaussianStates}. These Hamiltonians are of particular importance since they describe a wide range of continuous variable systems and their main sources of decoherence. For instance, quadratic Hamiltonians describe linear quantum optical systems \cite{QuadHamOpticalS} with applications in optical quantum computing \cite{OpticalQuantumC} and quantum metrology \cite{Metrology}, the vibrational modes of an ion chain with nearest-neighbour interactions \cite{IionChainRevervoir} and in general harmonic crystals \cite{Rubin}. Moreover, many continuous variable systems can be described by quadratic Hamiltonians in a certain approximate regime, for example opto-mechanical systems and nano-mechanical oscillators by linearizing interactions \cite{QuadHamOpto}. 

Our article is organized as follows. We start with the question of the existence of decoupling in finite and infinite dimensions. We then introduce dynamical decoupling for quadratic Hamiltonians and arrive at procedures we call \emph{homogenization} and \emph{decoherence suppression for environment-coupling}. The desired dynamics is achieved by rapidly swapping coordinates or rapidly rotating the system, respectively, see the blue arrows in Figure \ref{fig:oscillators}.
 Again we discuss their existence for given quantum systems. Finally we introduce a randomized scheme for these procedures and derive explicit analytic approximations and bounds for the gate error, i.e., the discrepancy from the idealized time-evolution of infinitely fast operations. The proofs are very technical and therefore deferred to the appendix. We illustrate and confirm the usefulness of these formulae with typical numerical examples.
\begin{figure}[!h]
	\centering
	 \includegraphics[width=0.79\columnwidth]{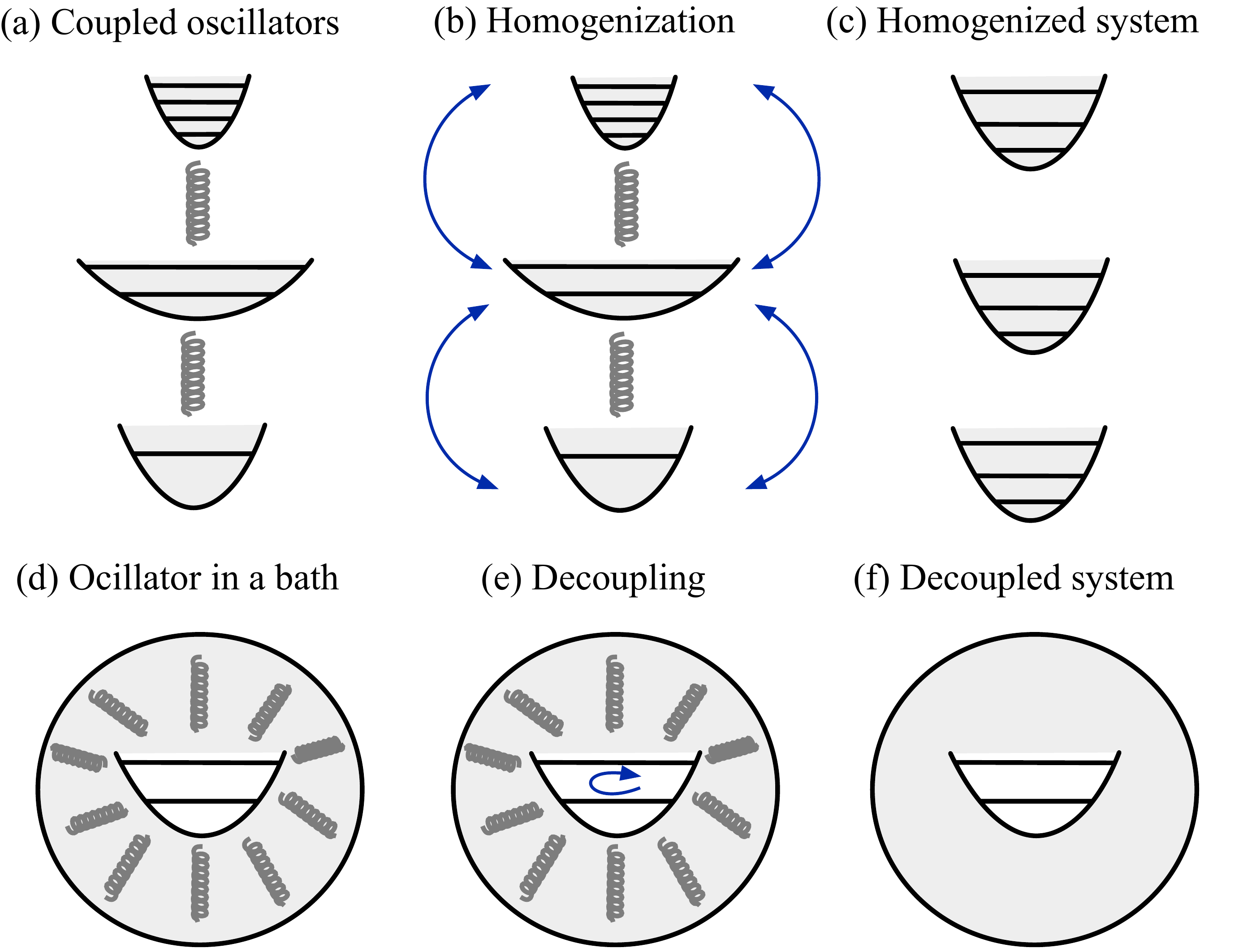}
	 \caption{\label{fig:oscillators}Illustration of the homogenization procedure of a system of coupled oscillators (a $\rightarrow$ c, see Theorem \ref{th3}) and suppression of decoherence of a single oscillator in a heat bath (d $\rightarrow$ f, see Theorem \ref{th4}).
}
\end{figure}

\section{Dynamical decoupling for finite and for infinite-dimensional systems}

Before we develop dynamical decoupling for quadratic Hamiltonians in infinite-dimensional quantum systems, we first review the concept of dynamical decoupling for finite-dimensional quantum systems, focussing on the group-based approach \cite{ErrorC, USMath}. Consider a finite-dimensional quantum system, say of dimension $n\in\N$, with Hilbert space $\C^n$ and Hamiltonian $H$. The idea of dynamical decoupling is to rapidly rotate the quantum system by means of classical fields in order to average the system-environment coupling in $H$ to zero. This can be achieved by modifying the dynamics applying unitary \emph{decoupling operations} $v$ instantaneously in time steps $\tau$, which are taken from a decoupling set. Such a \emph{decoupling set} $V$ is a finite group of unitary $n\times n$ matrices such that, for every $x\in\Mat_n(\C)$,
\begin{align}\label{eq:decouplingcondfinitedim}
\Pi_0(x):= \frac{1}{|V|}\sum_{v\in V}v^{\dagger} x v = \lambda \unit_n,	
\end{align}
with some $\lambda\in\C$ depending on $x$. It can easily be seen that $\lambda=\frac{1}{n}\tr{x}$. In particular, for traceless $x$ we have $\Pi_0(x)=0$.
These decoupling operations can be applied according to a fixed deterministic scheme or randomly from the set $V$. Let us look for simplicity at the fixed scheme. We then get the modified dynamics of time-evolution up to time $t=|V|\tau$:
\[
\prod_{v\in V} v^\dagger e^{-\rmi \tau H} v = \unit_n - \rmi \tau
\sum_{v\in V}v^{\dagger}Hv + O((\tau \|H\|_\infty)^2),
\]
where $\|\cdot\|_\infty$ denotes the standard operator norm on $\Mat_n(\C)$.
Condition \eqref{eq:decouplingcondfinitedim} guarantees that $\Pi_0(H)=\frac{1}{n}\tr{H}\unit_n$ and ensures the cancellation of the modified unitary time-evolution operator in first order in $\tau\| H\|_\infty$, while higher orders can be neglected under this assumption. Therefore, for $\tau\|H\|_\infty \ll 1$, the modified time evolution becomes up to a global phase effectively the identity. We refer to \cite{ErrorC, USMath} for a detailed definition of dynamical decoupling. 

The first question here is when a decoupling set can actually be found. 

\begin{theorem}\label{th1}
For every quantum system of finite dimension $n$, there exists a decoupling set, for example
\[
V= \U(n,\{0,1,-1,+\rmi,-\rmi\}),
\]
the group of $n\times n$ unitary matrices with entries in $\{0,1,-1,+\rmi,-\rmi\}$.
\end{theorem}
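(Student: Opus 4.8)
The plan is to check directly that $V=\U(n,\{0,1,-1,+\rmi,-\rmi\})$ is a finite group and then that the associated averaging map $\Pi_0$ of \eqref{eq:decouplingcondfinitedim} sends every $x$ to a scalar. For the group property, first observe that unitarity forces every element of $V$ to be \emph{monomial}: since each row and column of a unitary matrix has Euclidean norm $1$ while every entry here has modulus $0$ or $1$, each row and column has exactly one nonzero entry, lying in the fourth roots of unity $\mu_4=\{1,-1,\rmi,-\rmi\}$; conversely every such monomial matrix is unitary. Writing a general element of $V$ as $DP$ with $P$ a permutation matrix and $D$ diagonal with entries in $\mu_4$, one computes $(D_1P_1)(D_2P_2)=\bigl(D_1(P_1D_2P_1^{-1})\bigr)(P_1P_2)$ and $(DP)^{-1}=(P^{-1}D^{-1}P)P^{-1}$, and since $\mu_4$ is closed under multiplication, inversion and complex conjugation, $V$ is closed under products, inverses and adjoints. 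Hence $V$ is a finite subgroup of $\U(n)$. The case $n=1$ is trivial (every $x\in\Mat_1(\C)$ is already a scalar), so assume $n\ge 2$.

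The second step is the standard observation that $\Pi_0(x)$ lies in the commutant of $V$: for any $w\in V$, the substitution $v\mapsto vw$ — a bijection of the group $V$ — gives $w^{\dagger}\Pi_0(x)w=\Pi_0(x)$, i.e.\ $\Pi_0(x)$ commutes with $w$. Since commuting with a fixed matrix is a linear condition, $\Pi_0(x)$ then commutes with the entire complex-linear span of $V$ inside $\Mat_n(\C)$. Thus it suffices to show that $V$ spans $\Mat_n(\C)$ as a complex vector space: then $\Pi_0(x)$ is central in $\Mat_n(\C)$ and therefore equals $\lambda\unit_n$, and comparing traces — each $v^{\dagger}xv$ has the same trace as $x$ — forces $\lambda=\tfrac1n\tr{x}$, which also recovers the value of $\lambda$ stated right after \eqref{eq:decouplingcondfinitedim}.

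For the spanning step I would exhibit every matrix unit $E_{ij}$ as a $\C$-linear combination of elements of $V$. For the diagonal ones, let $D_i\in V$ be the diagonal matrix with $+1$ in slot $i$ and $-1$ elsewhere; then $E_{ii}=\tfrac12(\unit_n+D_i)$. For $i\ne j$, let $P_{ij}\in V$ be the permutation matrix of the transposition $(i\,j)$; then $D_iP_{ij}\in V$ as well, and using $E_{ii}P_{ij}=E_{ij}$ one gets $E_{ij}=\tfrac12\bigl(P_{ij}+D_iP_{ij}\bigr)$. Hence all $E_{ij}$, and therefore all of $\Mat_n(\C)$, lie in $\spa V$. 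Equivalently, one may phrase this by saying that $V$ contains the full group of signed permutation matrices, whose defining representation on $\C^n$ is irreducible, and irreducibility of $V$ is precisely the statement that $\Pi_0$ projects onto the scalar matrices.

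I do not expect a genuine obstacle; the argument is elementary. The only place that needs a little care is the spanning (equivalently, irreducibility) step, where the temptation is to look for clever elements of $V$: the point to notice is that a single sign flip composed with one transposition already manufactures all off-diagonal matrix units. Note also that the entries $\pm\rmi$ are not actually used — the subgroup with entries in $\{0,\pm1\}$ already suffices — but their presence is harmless and merely enlarges $V$.
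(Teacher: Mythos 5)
Your argument is correct, and it shares the paper's overall skeleton --- identify $V$ as the finite group of monomial matrices with entries in the fourth roots of unity, observe that $\Pi_0(x)$ lies in the commutant of $V$, and then show that this commutant is trivial --- but the key step is executed differently. The paper works at the level of vectors: it shows that starting from any nonzero $\xi\in\C^n$ one can produce $2\xi_i e_i$ as $\unit_n\xi - v\xi$ for a suitable sign-flip $v$, and then reach every basis vector by permutations, so $V$ acts irreducibly and the commutant is $\C\unit_n$ by Schur's lemma. You instead work at the level of the algebra: you exhibit every matrix unit $E_{ij}$ as an explicit $\C$-linear combination of group elements ($E_{ii}=\tfrac12(\unit_n+D_i)$ and $E_{ij}=\tfrac12(P_{ij}+D_iP_{ij})$, both of which I have checked), so $\spa V=\Mat_n(\C)$ and the commutant is the centre directly, with no appeal to Schur's lemma. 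The two routes are logically equivalent here (irreducibility and full spanning coincide for a group of unitaries, by Burnside), but yours is marginally more self-contained and gives the value $\lambda=\tfrac1n\tr{x}$ by an explicit trace comparison rather than as an aside. Your closing remark that the entries $\pm\rmi$ are never needed --- the signed permutation group $\O(n,\Z)$ already suffices --- is accurate and is in fact exactly the observation the paper relies on later in the proof of Theorem \ref{th3}, so it is a useful thing to have made explicit.
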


Notice that a decoupling set for a given quantum system is not unique, in general.

\begin{proof}
We first notice from \eqref{eq:decouplingcondfinitedim} that, for every $x\in\Mat_n(\C)$, $\Pi_0(x)$ lies in the commutant of $V$, which follows from the group property and finiteness of $V$. Thus if $V\subset \U(n)$ acts irreducibly on $\C^n$ then this commutant consists of $\C\unit_n$ alone. We therefore have to construct $V$ that acts irreducibly. We claim that $V= \U(n,\{0,1,-1,+\rmi,-\rmi\})$
does this. The elements in $V$ are those unitary matrices which have exactly one non-zero entry per row and per column, and this entry is either $+1,-1,+\rmi$ or $-\rmi$. 
It is easily seen that $V$ forms a finite group. Given an arbitrary $\xi\in\C^n$, with $i$-th component $\xi_i$ non-zero, we can take $v$ the diagonal matrix with $+1$ on all diagonal entries except for $-1$ at the $i$-th position. Then $\unit_n,v\in V$ and $\unit_n\xi-v\xi = 2\xi_i e_i$, where $e_i\in\C^n$ is the $i$-th canonical basis vector. All permutation matrices of $\C^n$ (matrices that permute coordinate entries w.r.t. the canonical basis of $\C^n$) are contained in $V$, and so applying a suitable permutation matrix $w\in V$ to $2\xi_i e_i$, we can get a multiple of an arbitrary basis vector $e_j$, $j=1,\ldots,n$, hence every vector in $\C^n$ through linear combination. This shows that $V$ acts irreducibly and transitively on $\C^n$, and thus proves our claim.
\end{proof}

Let us look at some simple examples. For a single qubit such a decoupling set may be chosen more explicitly as the Pauli group $\{\alpha\mathds{1},\alpha\sigma_{x},\alpha\sigma_{y},\alpha\sigma_{z}: \alpha=\pm 1,\pm\rmi\}$, whereas for a $q$-qubit system, it may be chosen as $4^{q+1}$ combinations of the Pauli spin operators on the tensor factors. The size of the decoupling set scales exponentially with the number of qubits of the considered system. This makes deterministic decoupling schemes inefficient for large quantum systems, since the modified dynamics is obtained through taking the decoupling operations one after the other deterministically from $V$.

Clearly, in order to suppress decoherence that is induced by a generic system-environment Hamiltonian 
\begin{equation}\label{eq:genericH}
H=H_{S}+H_{E}+\sum_{\alpha}S_{\alpha}\otimes E_{\alpha},
\end{equation}
with $S_{\alpha}$ and $E_{\alpha}$ being hermitian system and environmental operators respectively, it is enough to act with $v$ on the system alone. For traceless system operators, in the limit of infinitely fast decoupling we obtained a dynamics that is decoupled from the environment, i.e. $\Pi_0(H)=\mathds{1}\otimes H_{E}$. This makes it possible to fully suppress decoherence, independently of the specific form of the system-environment interaction that is present. As long as the system is finite-dimensional, we note that this is even true in the case of infinite-dimensional environments described by some unbounded operators and $H$ as in \eqref{eq:genericH}, if we make certain plausible domain assumptions \cite{USPRA}.

The question arises if, similarly to the finite-dimensional case, in an arbitrary infinite-dimensional setting decoherence can always be suppressed through dynamical decoupling. Before we address this question, we begin with an example \cite{Vitali}, which is, to the best of our knowledge, the only study in the literature where dynamical decoupling is investigated for a specific infinite-dimensional system. 
The model that was considered in \cite{Vitali} consists of a single harmonic oscillator that interacts with an environment of harmonic oscillators. 
The system-environment interaction is given by $H_{S,E}=\sum_{k}g_{k}(ab_{k}^{\dagger}+a^{\dagger}b_{k})$ where $a,a^{\dagger}$ and $b_{k},b_{k}^{\dagger}$ are bosonic creation and annihilation operators of the system and the environmental oscillators respectively and $g_{k}$ are real coupling constants determining the strength of the interaction with each environmental oscillator. We note that $H_{S,E}$ can be brought into the form \eqref{eq:genericH} by expressing the annihilation and creation operators through $\hat{x}$ and $\hat{p}$. It was pointed out in \cite{Vitali} that through a decoupling operation $v=\exp(-i\pi a^{\dagger}a)$, which corresponds to a phase space rotation of the system oscillator around $\pi$, the sign in front of $H_{S,E}$ can be reversed. This makes it possible to suppress such a system-environment interaction if we apply the decoupling operation reasonably fast. The efficiency of this procedure was studied in detail in \cite{Vitali} in terms of spectral properties of the environment. 
For our purposes what is important is the observation that such an interaction can always be suppressed with a single operation. 

We then may ask if we can identify a decoupling set that allows us to suppress decoherence for arbitrary infinite-dimensional systems. The most natural infinite-dimensional meaning of decoupling would be the following: for an infinite-dimensional quantum system with Hilbert space $\H$ and bounded operators $B(\H)$, a decoupling set is a finite subgroup $V\subset \U(\H)$ such that 
\begin{equation}\label{eq:dec-inf}
\Pi_0(x) = \frac{1}{|V|}\sum_{v\in V}v^{\dagger} x v = \lambda \unit_\H,
\end{equation}
with some $\lambda\in\C$ depending on $x$, is satisfied for all $x\in B(\H)$.

\begin{theorem}\label{th0}
An infinite-dimensional quantum system has no decoupling set.
\end{theorem}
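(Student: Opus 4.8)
The plan is to exploit the fact that finiteness of $V$ forces the existence of (many) finite-dimensional $V$-invariant subspaces, together with the observation — already implicit in the proof of Theorem \ref{th1} — that the averaging map $\Pi_0$ acts as the identity on the commutant of $V$. A non-scalar projection in that commutant then immediately contradicts \eqref{eq:dec-inf}.

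Concretely, suppose for contradiction that $V\subset\U(\H)$ is a finite subgroup satisfying \eqref{eq:dec-inf} for all $x\in B(\H)$. First I would record two elementary facts. (i) For every $x\in B(\H)$, the element $\Pi_0(x)$ lies in the commutant $V'=\{y\in B(\H): yv=vy \text{ for all }v\in V\}$: reindexing the sum by $v\mapsto vw$ gives $w^\dagger\Pi_0(x)w=\Pi_0(x)$ for all $w\in V$. (ii) If $y\in V'$ then, $v$ being unitary, $v^\dagger yv=v^{-1}yv=y$, so $\Pi_0(y)=y$. Thus \eqref{eq:dec-inf} holding for all $x$ is equivalent to the statement $V'=\C\unit_\H$, and it suffices to exhibit one element of $V'$ that is not a scalar multiple of $\unit_\H$.

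To build such an element, pick any unit vector $\xi\in\H$ and set $\H_\xi:=\operatorname{span}\{v\xi: v\in V\}$. Since $|V|<\infty$, $\H_\xi$ is a finite-dimensional (hence closed) subspace of dimension at most $|V|$, and it is $V$-invariant because $w(v\xi)=(wv)\xi$ with $wv\in V$. As $V$ consists of unitaries, $\H_\xi^\perp$ is $V$-invariant as well, so the orthogonal projection $P$ onto $\H_\xi$ commutes with every $v\in V$, i.e.\ $P\in V'$. But $P$ has finite rank between $1$ and $|V|$, while $\dim\H=\infty$; hence $P\neq 0$ and $P\neq\unit_\H$, so $P\in V'\setminus\C\unit_\H$. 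Applying \eqref{eq:dec-inf} to $x=P$ and using (ii) yields $P=\Pi_0(P)=\lambda\unit_\H$ for some $\lambda\in\C$, a contradiction. Therefore no decoupling set exists.

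I do not expect a genuine obstacle here: the only conceptual point is the elementary remark that a finite group always admits finite-dimensional invariant subspaces — a property that fails for infinite groups and is exactly what separates the finite- from the infinite-dimensional case. One could alternatively phrase the argument representation-theoretically, noting that any unitary representation of a finite group decomposes into finitely many finite-dimensional isotypic components $\K_i\otimes\C^{d_i}$, so some multiplicity space $\K_i$ must be infinite-dimensional and $V'\cong\bigoplus_i B(\K_i)$ is then infinite-dimensional; but the short cyclic-subspace argument above avoids invoking this machinery and keeps the proof self-contained.
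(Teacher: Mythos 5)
Your proof is correct, but it takes a different route from the paper's. The paper argues directly on the output of $\Pi_0$: it applies the averaging map to an arbitrary rank-one projection $x$ and observes that $\Pi_0(x)$, being an average of $|V|$ positive rank-one operators, has rank between $1$ and $|V|$, whereas $\lambda\unit_\H$ has rank $0$ or $\infty$ --- a one-line rank count with no mention of the commutant. You instead reformulate the decoupling condition as irreducibility ($V'=\C\unit_\H$, using that $\Pi_0$ maps into $V'$ and fixes it pointwise) and then exhibit a non-scalar element of $V'$, namely the projection onto the finite-dimensional cyclic subspace $\spa\{v\xi: v\in V\}$. Both arguments ultimately rest on the same finite-versus-infinite rank dichotomy, but the paper's version is shorter and needs no auxiliary lemmas, while yours makes explicit the structural reason for the failure --- a finite group always admits finite-dimensional invariant subspaces and hence cannot act irreducibly on an infinite-dimensional Hilbert space --- which ties in cleanly with the irreducibility discussion in the proof of Theorem \ref{th1}. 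All the steps you use (reindexing to show $\Pi_0(x)\in V'$, invariance of the orthogonal complement under a group of unitaries, $1\le\operatorname{rank}P\le|V|$) check out, so there is no gap.
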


\begin{proof}
Consider an infinite-dimensional system with Hilbert space $\H$ and suppose there is a decoupling set $V$. Then considering a rank-one projection $x\in B(\H)$, we see that $v^\dagger x v$ has rank 1  again, and hence that the left-hand side of \eqref{eq:dec-inf} has rank at least 1 (because sum of positive elements) and at most $|V|<\infty$ (because each of the summands has rank 1). On the other hand, the right-hand side has either rank 0 (if $\lambda=0$) or $\infty$. This is a contradiction, so no such $V$ exists.
\end{proof}

Let us make a few remarks:
\begin{itemize}
\item[(i)] Without contradicting Theorem \ref{th0}, it could still be possible to achieve $\Pi_0(H)=\lambda\unit_\H$ for specific Hamiltonians $H$ rather than all $x\in B(\H)$.
\item[(ii)] It can be shown \cite{PhDThesis} that interactions containing system operators with an unbounded positive spectrum can never be suppressed through applying unitary decoupling operations infinitely fast.
\item[(iii)] It seems worth studying an infinite compact version of decoupling sets, which is work in progress.
\end{itemize}
As dynamical decoupling does not work for generic infinite-dimensional quantum systems, we decided to investigate in more detail the specific class of quadratic Hamiltonians and to adjust the decoupling condition to suit that setting in a meaningful way. This allows us to represent the dynamics by a symplectic transformation on a \emph{finite-dimensional} space. Additionally this has the advantage that we can avoid the mathematical subtleties arising in infinite-dimensional spaces and the related problems in the characterization of the relevant time-scales.

\section{Dynamical decoupling for quadratic Hamiltonians}
We consider an $n$-mode bosonic system described by $n$ pairs of quadrature operators $\hat{x}_{j}$ and $\hat{p}_{j}$ acting on an infinite-dimensional Hilbert space $\H$ and satisfying the canonical commutation relation $[\hat{x}_{i},\hat{p}_{j}]=i\delta_{i,j}\unit_\H$. We always write $L(\H)$ for the linear (possibly unbounded) operators on $\H$, $B(\H)$ for the algebra of bounded operators on $\H$, $\U (\H)\subset B(\H)$ for the group of unitary operators on $\H$. 

By introducing the diagonal matrix $\boldsymbol{R}=\diag (\hat{x}_{1},...,\hat{x}_{n},\hat{p}_{1},...,\hat{p}_{n})\in \Mat_2(\R)\otimes\Mat_n(\R)\otimes L(\H)$, the commutation relation can be written as $[\boldsymbol{R}_{ii},\boldsymbol{R}_{jj}]=iJ_{i,j}\unit_\H$ where
\begin{align}
\label{eq:symplecticformJ}
J=\left(\begin{matrix}
	0 & \mathds{1}_{n}\\
	-\mathds{1}_{n} & 0 
\end{matrix}
\right),
\end{align}
is the symplectic form. 
We note that a suitable basis change leads to
\begin{align}\label{eq:basischange}
\boldsymbol{R}=\diag(\hat{x}_{1},\hat{p}_{1},...,\hat{x}_{n},\hat{p}_{n}), \quad 	J=\bigoplus_{j=1}^{n}\left(
	\begin{matrix}
	0 & 1\\
	-1 & 0	
	\end{matrix}
	\right),
\end{align}
but we will use the former choice of basis except otherwise mentioned. 

We are interested in quantum systems that are described by a quadratic Hamiltonian of the form 
\begin{align}
H=\frac{1}{2}\sum_{i,j}A_{i,j}\boldsymbol{R}_{ii}\boldsymbol{R}_{jj}\in L(\H),	
\end{align}
with $A$ being a real and symmetric $2n\times 2n$ matrix. The corresponding unitary time-evolution operations $U(t)=e^{-iHt}$ are the so-called Gaussian operations since they preserve the Gaussian character of quantum states. If we consider the Heisenberg evolution of the quadrature operator we obtain 
\[
(\unit_{2n}\otimes U(t)^{\dagger})\boldsymbol{R}(\unit_{2n}\otimes U(t))=(S(t)\otimes \unit_\H)\boldsymbol{R},
\]
where 
\begin{align}
\label{eq:symplecticdynamics}
S(t):=e^{-tAJ}, \quad t\in\R_+,	
\end{align}
belongs to the symplectic group $\Sp(2n,\R)$. Restricting to quadratic Hamiltonians, we see that there is a one-to-one correspondence between the time evolution operators $(U(t))_{t\in\R_+}$ acting on an infinite-dimensional Hilbert space and the finite-dimensional matrices $(S(t))_{t\in\R_+}$, which allows us to reduce to a finite-dimensional setup. For further details regarding quadratic Hamiltonians and symplectic transformations we refer to \cite{GaussianStates}. 

Now we want to introduce dynamical decoupling within the framework of symplectic transformations. While usually dynamical decoupling is introduced within quantum control theory using averaged Hamiltonians \cite{ErrorC} we consider here a decoupling sequence that captures the main aspects of dynamical decoupling. We remark that a formulation within averaged Hamiltonian theory in terms of symplectic transformation is straightforward \cite{PhDThesis}. Here we instantaneously apply symplectic decoupling operations from a finite group $G\subset \Sp(2n,\R)$, in time steps $\frac{t}{|G|}$, with $|G|$ being the number of elements in $G$. If this is done deterministically in a fixed periodic cycle running through all of $G$, call it $g_1,\ldots g_{|G|}$ with $g_1=g_{|G|+1}=1$, then the correction operation applied instantaneously at time $kt/|G|$ is $g_{k+1}^{-1}g_k$, and the modified dynamics at $t$ becomes  
\begin{align}\label{eq:dec-def}
	S^{(1)}(t)=\prod_{k=1}^{|G|}g_k e^{-\frac{t}{|G|} A J}g_k^{-1}=\prod_{k=1}^{|G|} e^{-\frac{t}{|G|} g_kAJ g_k^{-1}}.
\end{align}
If we repeat this cycle $N$ times and shrink the time steps by a factor $1/N$, we obtain
\[
S^{(N)}(t)=\Big(\prod_{k=1}^{|G|} e^{-\frac{t}{|G|N} g_kAJ g_k^{-1}} \Big)^N.
\]
Using a generalized Trotter formula \cite{TrotterSuzuki}, we get
 \begin{align*}
 \lim_{N\to\infty} S^{(N)}(t)=\exp\left(-t \sum_{k=1}^{|G|} g_kAJ g_k^{-1}\right) =\exp\left(-t \sum_{g\in G}gAJ g^{-1}\right).
 \end{align*}
Analogously to the unitary case we can define a tentative symplectic decoupling condition for $G$: 
\begin{align}
\label{eq:defmap}
\frac{1}{|G|}\sum_{g\in G}gAJ g^{-1}= \lambda \unit_{2n}, \quad A\in\Mat_{2n}(\R),
\end{align}
with some $\lambda\in\R$ depending on $A$. However, using the relation $J=g^{T}Jg$, valid for every $g\in\Sp(2n,\R)$, we can rewrite \eqref{eq:defmap} as
\[
\frac{1}{|G|}\sum_{g\in G}gAg^{T}= \lambda J, \quad A\in\Mat_{2n}(\R).
\]
Considering e.g. a positive (hence nonzero) $A\in\Mat_{2n}(\R)$ in this equation, we see that the left-hand side being a sum of positive operators must be positive again, whereas the right-hand side is antisymmetric, which is impossible. Thus the tentative decoupling condition \eqref{eq:defmap} cannot be realized. It reflects for exmaple the fact that a system of non-interacting harmonic oscillators cannot be stopped rotating. As already pointed out in the previous section, dynamical decoupling in the usual sense cannot work for every infinite-dimensional system, and apparently also for quadratic Hamiltonians this is not possible in this strict sense.  However, maybe we are demanding too much.
\subsection*{Homogenization}
Since we cannot decouple harmonic oscillators, a natural relaxation of condition \eqref{eq:defmap} may be written as follows:

A finite subgroup $G\subset \Sp(2n,\R)$ is called a \emph{homogenization set} if, for every symmetric $A\in\Mat_{2n}(\R)$, we have
\[
\frac{1}{|G|}\sum_{g\in G}gA J g^{-1}=\lambda J,	
\]
or equivalently
\begin{align}
\label{eq:homogenizationcond}
\Pi(A):=\frac{1}{|G|}\sum_{g\in G}gAg^{T}=\lambda \unit_{2n},	
\end{align}
with some $\lambda\in \R$ depending on $A$.

In words: instead of requiring that the system does not evolve anymore, we now require that a set of harmonic oscillators do not interact with each other but rotate all with the same frequency $\lambda$ after we have applied symplectic operations infinitely fast. We call this process \emph{homogenization}. We note here that the homogenization procedure is similar to symmetrization of quantum states developed in \cite{SymmetrizationStates}. It remains to identify a set $G$ of symplectic operations that satisfies the homogenization condition \eqref{eq:homogenizationcond}. 

\begin{theorem}\label{th3}
For every $n\in\N$, a homogenization set for $\Mat_{2n}(\R)$ exists, for example
\begin{align}
\label{eq:homogenisationset}
G=\left< \unit_2\otimes \O(n,\mathbb Z),J \right >.
\end{align}
\end{theorem}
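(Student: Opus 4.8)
The plan is to check the three properties required of a homogenization set — that $G$ is (i)~a subgroup of $\Sp(2n,\R)$, (ii)~finite, and (iii)~satisfies the averaging identity \eqref{eq:homogenizationcond} — with (iii) carrying all the content. For (i) I would just verify that the two types of generators are symplectic: $(\unit_2\otimes O)^TJ(\unit_2\otimes O)=J$ reduces to $O^TO=\unit_n$, which holds for $O\in\O(n,\Z)$, and $J^TJJ=-J^3=J$ because $J^2=-\unit_{2n}$. For (ii) I would note that both types of generators are orthogonal matrices with entries in $\{0,\pm1\}$, hence so is every product of them; an orthogonal integer matrix is a signed permutation matrix, and there are only finitely many $2n\times2n$ such matrices, so $G$ is finite and $\Pi$ is well defined.

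For (iii) I would first reduce to a fixed-point statement. For any $A$ the matrix $\Pi(A)=\frac1{|G|}\sum_{g\in G}gAg^T$ is itself $G$-invariant ($h\Pi(A)h^T=\Pi(A)$ after reindexing the sum) and is symmetric whenever $A$ is, so it suffices to prove that the only symmetric $A\in\Mat_{2n}(\R)$ with $gAg^T=A$ for all $g\in G$ are the scalar matrices $\lambda\unit_{2n}$. Indeed, this immediately gives $\Pi(A)=\lambda(A)\unit_{2n}$ for some $\lambda(A)\in\R$, and taking traces (each $g\in G$ is orthogonal, so $gg^T=\unit_{2n}$) identifies $\lambda(A)=\tfrac1{2n}\tr{A}$. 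Since invariance under $G$ is equivalent to invariance under the generating set, only the generators are needed.

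The fixed-point computation is elementary bookkeeping. Write $A=\bigl(\begin{smallmatrix}P&Q\\Q^T&R\end{smallmatrix}\bigr)$ with symmetric $n\times n$ blocks $P,R$. Invariance under $\unit_2\otimes O$ forces $OPO^T=P$, $OQO^T=Q$, $ORO^T=R$ for every signed permutation matrix $O\in\O(n,\Z)$; letting $O$ range over the ordinary permutation matrices makes each of $P,Q,R$ constant on the diagonal and constant off it, and letting $O$ range over the coordinate sign flips $\diag(1,\dots,-1,\dots,1)$ annihilates all off-diagonal entries, so $P=p\unit_n$, $Q=q\unit_n$, $R=r\unit_n$. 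Then a one-line computation gives $JAJ^T=\bigl(\begin{smallmatrix}R&-Q^T\\-Q&P\end{smallmatrix}\bigr)$, and equating this with $A$ forces $P=R$ and $Q=-Q^T$; since $q\unit_n$ is symmetric this yields $q=0$ and $p=r$, i.e.\ $A=p\unit_{2n}$, as desired.

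I do not anticipate a genuine obstacle here; the two points that require attention are that one must use the \emph{signed} permutations in $\O(n,\Z)$ and not merely permutations — the sign flips are precisely what kills the off-diagonal blocks — and that $J$ is the ingredient coupling the $\hat x$-block to the $\hat p$-block, which is what forces $P$ and $R$ to be equal (so that $A=\lambda\unit_{2n}$ indeed corresponds to a system of non-interacting oscillators all rotating with the common frequency $\lambda$, matching the intended meaning of homogenization). As an alternative, step~(iii) can be recast representation-theoretically: $\R^{2n}$ is an irreducible $G$-module — any invariant subspace has the form $U\otimes\R^n$ with $U\subseteq\R^2$, since $\R^n$ is $\O(n,\Z)$-irreducible of real type and $J=\bigl(\begin{smallmatrix}0&1\\-1&0\end{smallmatrix}\bigr)\otimes\unit_n$ preserves no line in $\R^2$ — and then, keeping track of the fact that the representation is of complex type, the space of $G$-invariant symmetric bilinear forms (which contains $\unit_{2n}$) turns out to be one-dimensional; but the matrix calculation above is shorter and self-contained.
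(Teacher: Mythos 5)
Your proof is correct, but it is organized differently from the paper's. The paper computes $\Pi(A)$ head-on: it uses the coset structure $G=(\unit_2\otimes \O(n,\Z))\,\cup\, J\,(\unit_2\otimes \O(n,\Z))$ to write the average as two terms, invokes the irreducibility of $\O(n,\Z)$ on $\R^n$ established in the proof of Theorem \ref{th1} to collapse each $n\times n$ block of $A$ to $\tfrac{\tr{A^{(i,j)}}}{n}\unit_n$, and then averages over the $J$-coset to obtain $\tfrac{\tr{A}}{2n}\unit_{2n}$ explicitly. You instead make the standard reduction ``the average is a $G$-fixed point, so it suffices to show the only $G$-invariant symmetric matrices are scalars,'' and then determine the fixed points of the generators by hand (permutations and sign flips force each block to be scalar; $J$ forces $P=R$ and $Q=0$). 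Your route has the advantages of being self-contained (it re-derives the relevant consequence of irreducibility by elementary bookkeeping rather than citing Theorem \ref{th1}) and of never needing to enumerate $G$ or its cosets, since invariance only has to be checked on generators; the trace argument then recovers $\lambda=\tfrac{1}{2n}\tr{A}$. The paper's route is more computational but produces the value of $\lambda$ and the intermediate block-averaged form of $\Pi(A)$ directly, and it recycles work already done for Theorem \ref{th1}. You also explicitly verify finiteness of $G$ (all elements are signed permutation matrices), which the paper leaves implicit. Both arguments ultimately rest on the same two facts — $\O(n,\Z)$-invariance scalarizes each block, and conjugation by $J$ couples the two diagonal blocks and kills the off-diagonal one — so the difference is one of packaging rather than substance, but your packaging is the cleaner of the two.
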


\begin{proof}
First of all, it is clear that $\unit_2\otimes \O(n,\mathbb Z)$ and $J$ commute, so $gJg^T=J$, for every $g\in \unit_2\otimes \O(n,\mathbb Z)$, and $JJJ^T=J$. Thus $G\subset \Sp(2n,\R)$.
We decompose $A$ according to the product $\Mat_2(\R)\otimes\Mat_n(\R)$ into four $n\times n$ blocks $A^{(i,j)},~i,j=1,2$, where $(A^{(1,2)})^{T}=A^{(2,1)}$ and $(A^{(i,i)})^{T}=A^{(i,i)}$. We can then write \eqref{eq:homogenizationcond} as
\begin{align}
\label{eq:map1}
\Pi(A)=\frac{1}{|G|}&\sum_{g\in O(n,\mathbb Z)}\left[ \left(\begin{matrix}
gA^{(1,1)}g^{T} & gA^{(1,2)}g^{T}\\
gA ^{(1,2)}g^{T} & gA^{(2,2)}g^{T}	
\end{matrix} \right)\right.\nonumber \\  
    &+J \left. \left(\begin{matrix}
gA^{(1,1)}g^{T} & gA^{(1,2)}g^{T}\\
gA ^{(2,1)}g^{T} & gA^{(2,2)}g^{T}	
\end{matrix}
    \right)   J^{T}
       \right].	
\end{align}
Now we recall from the proof of Theorem \ref{th1} that the group $\O(n,\Z) = \U(n,\{0,+1,-1\})$ acts irreducibly on $\R^n$, so it forms a decoupling set for $\Mat_n(\R)$ in the sense of \eqref{eq:decouplingcondfinitedim}. Applying \eqref{eq:decouplingcondfinitedim} to each block, we find 
\begin{align*}
\Pi(A)&=\frac{1}{2}\left(\begin{matrix}
\frac{\tr{A^{(1,1)}}}{n}\mathds{1}_{n} & \frac{\tr{A^{(1,2)}}}{n}\mathds{1}_{n}\\
\frac{\tr{A^{(2,1)}}}{n}\mathds{1}_{n} & \frac{\tr{A^{(2,2)}}}{n}\mathds{1}_{n}
\end{matrix}
    \right)\nonumber\\
    &+\frac{1}{2}J  \left(\begin{matrix}
\frac{\tr{A^{(1,1)}}}{n}\mathds{1}_{n} & \frac{\tr{A^{(1,2)}}}{n}\mathds{1}_{n}\\
\frac{\tr{A^{(2,1)}}}{n}\mathds{1}_{n} & \frac{\tr{A^{(2,2)}}}{n}\mathds{1}_{n}	
\end{matrix}
    \right)   J^{T},\nonumber \\
    &=\frac{\tr {A^{(1,1)}}+\tr {A^{(2,2)}}}{2n}\mathds{1}_{2n}\\
    &=\frac{\tr A}{2n}\mathds{1}_{2n},\nonumber	
\end{align*}
which completes the proof and we identify $\lambda=\frac{\tr{A}}{2n}$ as an averaged frequency. 
\end{proof}

The elements in $\unit_2\otimes \O(n,\Z)$ are ``signed permutation matrices". In words:
through rapidly swapping the coordinates of the oscillators, we can map a set of $n$ interacting harmonic oscillators onto non-interacting oscillators rotating with an averaged frequency $\lambda=\frac{\tr{A}}{2n}$. 

\subsection*{Remark on finite-energy correction operations}

So far, the correction operations are assumed to be applied in the form of infinitely strong instantaneous pulses. This is certainly idealised and not very physical, yet generally taken for granted when discussing the theory of dynamical decoupling as it is a reasonable approximation facilitating most computations. If we explicitly want to work with finite-energy operations, a typical alternative would be so-called \emph{Eulerian dynamical decoupling} \cite{EulerianDecoupling}. In that case, the decoupling operations are implemented by a continuous path in the automorphism group of the system rather than a discontinuous path which jumps between the elements $G$. Let us describe the idea briefly, following \cite{EulerianDecoupling} but adapted to the context of homogenization. We  refer to future work for details of the construction, proofs, and further consequences.

Let $G$ be a homogenization set, i.e., a finite subgroup of $\Sp(2n,\R)$ as above, and $\Gamma$ a minimal generating set for $G$. Consider the Cayley graph $C(G,\Gamma)$, which has $G$ as vertices and $\{(g,\gamma g): g\in G, \gamma\in\Gamma\}$ as directed edges. This means that every vertex has $|\Gamma|$ incoming and $|\Gamma|$ outgoing edges. An Eulerian cycle is a cycle through this graph which travels precisely once through each edge according to its orientation, and hence $|\Gamma|$ times through each vertex, and with total length $|G|\cdot |\Gamma|$. One can prove that such a cycle exists. Let us fix a total time $t$ and write $\tau:= \frac{t}{N |G|\cdot |\Gamma|}$ for the time difference between consecutive operations, i.e., between consecutive vertices. Let now $\K$ be such an Eulerian cycle, which we write as an ordered tuple $\K=(g_1,g_2,\ldots,g_{|G|\cdot |\Gamma|})\in G^{|G|\cdot |\Gamma|}$ and we extend this periodically to $G^\N$, so $g_{|G|\cdot |\Gamma|+1}=g_1$. Then at time $k\tau$, the operation $\gamma_k := g_{k+1}^{-1}g_{k}\in\Gamma$ is applied instantaneously, so that the total time evolution in \eqref{eq:dec-def} becomes
\begin{align}\label{eq:dec-def2}
S^{(N)}(t)=\prod_{k=1}^{N |G|\cdot|\Gamma|} g_k e^{-\tau A J}g_k^{-1} 
= g_1 \Big( \prod_{k=1}^{N |G|\cdot|\Gamma|} \gamma_k e^{-\tau A J} \Big) g_1^{-1}.
\end{align}

So far, we have just modified the order of the correction operations, not their energy. Now instead suppose that the operation $\gamma_k$ is not applied instantaneously at time $k\tau$ but rather obtained as a continuous curve from $1$ to $\gamma_k$. Such a map is constructed as follows: let us recall, e.g.~from \cite{HN}, that every element $\gamma\in\Sp(2n,\R)$ can be decomposed as
\[
\gamma = e^{\frac{\tau}{2}X} e^{\frac{\tau}{2}Y},
\]
with certain $X,Y\in\sp(2n,\R)$. Then we can define
\[
\tilde{\gamma}: s\in [0,\tau] \ra \Sp(2n,\R),\quad \tilde{\gamma}(s) = \left\{ \begin{array}{l@{\: :\:}l}
 e^{ sY} & s\in[0,\tau/2]\\
e^{(s-\frac{\tau}{2})X} e^{\frac{\tau}{2}Y} & s\in(\tau/2,\tau].
\end{array}
\right.
\]
and we have $\tilde{\gamma}(0)=1$ and $\tilde{\gamma}(\tau)=\gamma$.
Then \eqref{eq:dec-def2} changes to a time-ordered product, and in first order expansion this becomes
\begin{align*}
S^{(N)}(t)
\approx & \unit_{2n} + \sum_{k=1}^{N |G|\cdot|\Gamma|} g_k \Big(\int_0^\tau \tilde{\gamma}_k(s)^{-1} A J \tilde{\gamma}_k(s) \rmd s \Big) g_k^{-1}\\
=& \unit_{2n} + \frac{t}{ \tau |G|\cdot|\Gamma|}\sum_{g\in G} g \Big( \sum_{\gamma\in\Gamma} \int_0^\tau \tilde{\gamma}(s)^{-1} A J \tilde{\gamma}(s) \rmd s \Big) g^{-1} \\
=& \unit_{2n} + \lambda J,
\end{align*}
with some $\lambda\in\R$ possibly depending on everything in the second line, as follows from the homogenization condition \eqref{eq:homogenizationcond} for $G$.

To summarize: if $G$ is a homogenization set generated by a minimal set $\Gamma$ then the modified procedure using a continuous implementation of Eulerian cycles of length $|G|\cdot|\Gamma|$ yields homogenization again but with finite-energy correction operations.

\subsection*{Decoherence suppression for system-environment interactions}

Now we want to come back to our initial motivation, the suppression of decoherence induced by generic quadratic system-environment interactions. The previous homogenization condition \eqref{eq:homogenizationcond} describes a mapping of interacting harmonic oscillators onto non-interacting oscillators that rotate with an averaged frequency. Hence in that condition the suppression of the interactions between the oscillators is included and it can be achieved by applying the operations \eqref{eq:homogenisationset} infinitely fast to the whole system. In general we have no access to the environment and therefore we now want to formulate a condition that allows us to suppress the system-environment interactions if we act on the system alone. We first partition the total system into the system of interest (S) with $n_S$ oscillators and the environment (E) with $n_E$ oscillators, so $n=n_S+n_E$, noticing that for symplectic dynamics we have a direct sum structure of the underlying space. 
Switching to the basis in \eqref{eq:basischange}, we thus write
\begin{align}\label{eq:envA}
A=
  \left(\begin{array}{c|ccc}
    A_{S} & ~ & I & ~ \\ \hline
    ~  & \multicolumn{3}{c}{\multirow{3}{*}{\raisebox{-0mm}{\scalebox{1}{$A_{E}$}}}} \\
    \raisebox{2pt}{$I^{T}$} & & &\\
    ~ & & & 
  \end{array}\right),
\end{align} 
where $A_{S}\in\Mat_{2n_s}(\R)$, $A_{E}\in \Mat_{2n_E}(\R)$ are symmetric matrices describing the uncoupled dynamics of S and E, and $I\in \Mat_{2n_S,2n_E}(\R)$ describes the interactions between system and environment.
Now, if we apply the decoupling operations only to the system, namely of the form $\tilde{g}=g\oplus\mathds{1}_{2n_{E}}$, we obtain in the limit of infinitely fast decoupling a dynamics governed by 
 \begin{align}\label{eq:tildePi}
\tilde{\Pi}(A)=\frac{1}{|\tilde{G}|}\sum_{\tilde{g}\in \tilde{G}}\tilde{g}A\tilde{g}^{T}
  =\left(\begin{array}{c|ccc}
    \frac{1}{|G|}\sum_{g\in G} gA_{S}g^T & ~ & \frac{1}{|G|}\sum_{g\in G}gI & ~ \\ \hline
    ~  & \multicolumn{3}{c}{\multirow{3}{*}{\raisebox{-0mm}{\scalebox{1}{$A_{E}$}}}} \\
    \raisebox{2pt}{$\frac{1}{|G|}\left(\sum_{g\in G}gI\right)^{T}$} & & &\\
    ~ & & & 
  \end{array}\right),
\end{align}
Obviously, in order to suppress the interaction with the environment, we need a group $G$ satisfying  
 \begin{align}
 \label{eq:decouplingdecsym}
 	\sum_{g\in G}g=0. 
 \end{align}
The simplest such group we can imagine is given by $G=\{\mathds{1}_{2n_{S}},~-\mathds{1}_{2n_{S}}\}$, and we notice that it leaves the system dynamics invariant. Thus, 
\[
\tilde{G}=\{\mathds{1}_{2n_S}\oplus\unit_{2n_E},
~-\mathds{1}_{2n_S}\oplus\unit_{2n_E}\},
\]
i.e., $\tilde{\Pi}(A)=A_{S}\oplus A_{E}$. 

The two operations in $G$ correspond to ``no-rotation'' and a global $\pi$-rotation of the system oscillators. It shows that the operation from \cite{Vitali}, introduced in the beginning, allows us to decouple arbitrary quadratic system-environment interactions too. This is not really surprising, since in the unitary picture we can always reverse the sign in front of interaction parts of the form $\hat{x}\otimes \hat{x}$ and $\hat{p}\otimes \hat{p}$ by applying $\exp(i\pi a^{\dagger}a)$.  Here however we want to emphasize two things. First of all, in contrast to finite-dimensional systems, the system can always be decoupled from the environment using two operations, independent of how big the system or the environment is.  Second, for finite-dimensional systems, on the one hand the irreducible action of the decoupling set suppresses all interactions with the environment, while on the other it modifies the system dynamics in such a way that it is (up to a global phase) given by the identity. For continuous variable systems described by quadratic Hamiltonians instead we can always suppress the interaction with the environment without disturbing the system dynamics at the same time. Let us summarize this as follows:

\begin{theorem}\label{th4}
Suppose a system of $n_S$ oscillators couples to an environment of $n_E$ oscillators and is such that the total time-evolution is described by a quadratic Hamiltonian. Then the interaction can be decoupled completely without influencing the system's internal dynamics, by choosing the decoupling set 
\[
\tilde{G}=\{\mathds{1}_{2n_S}\oplus\unit_{2n_E},
~-\mathds{1}_{2n_S}\oplus\unit_{2n_E}\} .
\]
\end{theorem}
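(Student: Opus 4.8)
The plan is to verify directly that $\tilde G$ satisfies all requirements of a (modified) decoupling set in the sense developed above, and then to read off the limiting dynamics from \eqref{eq:tildePi}. First I would check that $\tilde G$ is a finite subgroup of $\Sp(2n,\R)$. Setting $G=\{\unit_{2n_S},-\unit_{2n_S}\}$, this is a copy of $\Z_2$; since the symplectic form splits as $J=J_S\oplus J_E$ along the system--environment partition (the direct-sum remark preceding \eqref{eq:envA}), and $(\pm\unit_{2n_S})^T J_S(\pm\unit_{2n_S})=J_S$, both block-diagonal matrices $\tilde g=g\oplus\unit_{2n_E}$ preserve $J$. Hence $\tilde G\subset\Sp(2n,\R)$, and it is obviously closed under products and inverses.

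Next I would compute the averaged generator. Because $\sum_{g\in G}g=\unit_{2n_S}+(-\unit_{2n_S})=0$, condition \eqref{eq:decouplingdecsym} holds, so in \eqref{eq:tildePi} the off-diagonal blocks $\frac{1}{|G|}\sum_{g\in G}gI=\frac12(I-I)=0$ vanish, while the system block $\frac{1}{|G|}\sum_{g\in G}gA_Sg^{T}=\frac12(A_S+A_S)=A_S$ and the environment block $A_E$ are left untouched. Therefore $\tilde\Pi(A)=A_S\oplus A_E$.

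Finally I would invoke the generalized Trotter argument already used above: repeating the cycle $N$ times while shrinking the time steps by $1/N$, the modified symplectic dynamics converges to $\exp(-t\,\tilde\Pi(A)J)=\exp\big(-t(A_S\oplus A_E)(J_S\oplus J_E)\big)=e^{-tA_SJ_S}\oplus e^{-tA_EJ_E}$. This is precisely the free system evolution $e^{-tA_SJ_S}$ in direct sum with the free environment evolution: no system--environment coupling survives, and the system's internal dynamics is exactly what it would have been governed by $A_S$ alone, which is the assertion.

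I do not expect any genuine obstacle here; the argument is a two-term average. The only two points deserving a moment's care are that $-\unit_{2n_S}$ is symplectic, which works because the identity block has even size $2n_S$, and that restricting the correction operations to the system tensor factor is compatible with the symplectic structure — which is exactly the direct-sum property of symplectic dynamics noted before \eqref{eq:envA}. Everything else follows immediately from \eqref{eq:tildePi}.
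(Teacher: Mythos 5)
Your proposal is correct and follows essentially the same route as the paper: the paper's ``proof'' of Theorem \ref{th4} is exactly the discussion preceding it, namely verifying that $\sum_{g\in G}g=0$ for $G=\{\unit_{2n_S},-\unit_{2n_S}\}$ so that the off-diagonal blocks in \eqref{eq:tildePi} vanish while $A_S$ and $A_E$ survive, giving $\tilde\Pi(A)=A_S\oplus A_E$ in the infinitely-fast (Trotter) limit. Your added checks that $\tilde G\subset\Sp(2n,\R)$ and that the block structure is compatible with the direct-sum splitting of $J$ are sound and merely make explicit what the paper leaves implicit.
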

Remark: since the homogenization set \eqref{eq:homogenisationset} satisfies Eq. \eqref{eq:decouplingdecsym} we can combine both procedures. The system can be homogenized while suppressing the interaction with the environment. Moreover, this can be adjusted again to Eulerian cycles and finite-strength correction operations.

\section{Random dynamical decoupling and bounds}

\subsection*{Probabilistic setup and derivation of bounds}
Up to now we have discussed how we can achieve homogenization and the suppression of decoherence for quadratic Hamiltonians in the limit of infinitely fast operations. Clearly, in practice this limit is not attainable meaning that non-zero orders in $\tau\| A  \|_\infty $ enter the dynamics. Throughout this article $\|\cdot\|_\infty$ denotes the standard operator norm on $\Mat_{2n}(\R)$ (or some other operator algebra depending on the context). In the following we provide bounds, characterizing how well dynamical decoupling works for continuous variable systems if the decoupling operations are applied reasonably though not infinitely fast. 

We discuss here the case of homogenization, while decoherence suppression can be treated analogously. Typically error estimates for dynamical decoupling are obtained by estimating the higher orders of the Magnus expansion, the Dyson series or the Trotter formula \cite{ErrorC}. Here we consider a random dynamical decoupling scheme \cite{LVioalRandD, USMath} and use a central limit theorem developed in \cite{USMath} in order to obtain the description of the modified time evolution $t\mapsto S^{(1)}(t)$ as a stochastic process in $\Sp(2n,\R)$. The idealized -- though impossible -- time evolution would result from $\tau=0$; it would be given by the (non-random) function
\begin{align}
\label{eq:targetevolution}
S_0(t)= e^{- t \Pi(A) J}, \quad t\in\R_+ ,
\end{align}
while $\Pi(A)$ has to be replaced by $\tilde{\Pi}(A)$ in \eqref{eq:tildePi} if we want to study decoherence suppression instead.

We would then like to find an upper bound for the expectation value of the gate error.
More precisely, we consider the case in which symplectic decoupling operations $g_{j}$ at time $j\tau$, with $j\in\N$, are taken independently and uniformly random from $G$ as in \eqref{eq:homogenisationset} such that the dynamics is modified according to 
\begin{align}
\label{eq:timeevolrandomdecd}
S^{(1)}(\ell\tau)=\prod_{j=1}^{\ell} g_{j}e^{-\tau AJ} g_{j}^{-1},
\end{align}
at time $t=\ell\tau$. Again, $g_j$ and $G$ would have to be replaced by $\tilde{g}_j$ and $\tilde{G}$ if we want to study decoherence suppression rather than homogenization.
The dynamics is now described by a random walk $\ell\in\N\mapsto S^{(1)}(\ell\tau)\in \Sp(2n,\R)$ on the symplectic group. In order to apply analytical tools, we would like to approximate this by a continuous-time stochastic process. This idea has been realized in \cite{USMath} for the case of dynamical decoupling. A similar treatment for homogenization or decoherence suppression, as shown in Appendix \ref{app}, leads to a limit stochastic process
\[
t\in\R_+ \mapsto S^{(\infty)}(t).
\]
If $\tau\|A\|_\infty\ll 1$ then this is a very good approximation of the actual time evolution $S^{(1)}(t)$. The quantity we would like to investigate then is the expectation of the \emph{gate error}:
\begin{align}
\label{eq:decouplingerror}
\varepsilon (t) =\|S_0(t)-S^{(\infty)}(t) \|_2^2,\quad t\in\R_+, 	
\end{align}
where $\|\cdot\|_2$ denotes the Hilbert Schmidt norm and $S_{0}(t)$ the idealized time evolution given by \eqref{eq:targetevolution}. We find:

\begin{theorem}\label{th5}
For $\tau\|A\|_\infty \ll 1$, i.e., if homogenization or decoupling operations are applied sufficiently fast, the expected gate error for homogenization behaves as
\begin{align}\label{eq:boundhom}
\E [\varepsilon (t)] \approx  2\tau t \| A-\Pi(A)\|_2^2 \le 16\tau t n \|A\|_\infty^2.	
\end{align}
In the case of decoherence suppression we get
\begin{align}\label{eq:boundsuppression}
\E [\varepsilon (t)] \approx 2\tau t \| A-\tilde{\Pi}(A)\|_2^2 \le 16\tau t n_{S}n_{B} k^{2}, 
\end{align}
with $k$ being the greatest absolute value of system-environment coupling entries in the matrix $I$ in \eqref{eq:envA}.
\end{theorem}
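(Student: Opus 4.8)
The plan is to realise the modified random walk $\ell\mapsto S^{(1)}(\ell\tau)$ in $\Sp(2n,\R)$ as a discretisation of a continuous-time stochastic process, apply the central limit theorem of \cite{USMath}, and then expand the resulting limit process to leading order in $\tau$. First I would set $L := -AJ$ and $L_g := -gAJg^{-1} = g L g^{-1}$ for $g\in G$, so that each step of \eqref{eq:timeevolrandomdecd} is $g_j e^{\tau L} g_j^{-1} = e^{\tau L_{g_j}} = \unit_{2n} + \tau L_{g_j} + O((\tau\|A\|_\infty)^2)$ with the $g_j$ i.i.d.\ uniform on $G$. The mean generator is $\E[L_{g_1}] = -\Pi(A)J =: L_0$, which is exactly the generator of the idealised evolution $S_0(t) = e^{tL_0}$ from \eqref{eq:targetevolution}. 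Writing $L_{g_j} = L_0 + \Delta_j$ with $\E[\Delta_j]=0$, the product over $\ell = t/\tau$ steps, in the regime $\tau\|A\|_\infty \ll 1$ and $\ell\to\infty$ with $\ell\tau = t$ fixed, converges (by the CLT of \cite{USMath}, transplanted to $\Sp(2n,\R)$ as in Appendix \ref{app}) to a process $S^{(\infty)}(t)$ solving a linear matrix SDE whose drift is $L_0$ and whose diffusion coefficient is governed by the single-step covariance $\operatorname{Cov}(\Delta_1) = \E[\Delta_1\otimes\Delta_1]$, scaled by $\tau$.

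Next I would compute $\E[\varepsilon(t)] = \E\|S_0(t) - S^{(\infty)}(t)\|_2^2$. Since $S_0(t)$ is deterministic and $\E[S^{(\infty)}(t)] = S_0(t)$ to leading order (the drift being exactly $L_0$), this expectation is, to first order in $\tau$, the variance $\E\|S^{(\infty)}(t) - \E S^{(\infty)}(t)\|_2^2$. Using the SDE, $S^{(\infty)}(t) - S_0(t) \approx \sum$ of fluctuation increments conjugated by $S_0$, and taking the Hilbert--Schmidt norm squared and expectation, the cross terms vanish by independence and the martingale property, leaving a single time integral of the step covariance. Because $S_0(t) = e^{-t\Pi(A)J}$ and $J$ is orthogonal while $\Pi(A)$ commutes with the $O(n,\Z)$-average structure, the conjugation by $S_0(s)$ is (up to lower-order terms that I would absorb into the $\approx$) an isometry for $\|\cdot\|_2$, so the integral collapses to $t$ times the single-step variance. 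The key algebraic identity is then
\begin{align*}
\E\|\Delta_1\|_2^2 = \E\|L_{g_1} - L_0\|_2^2 = \E\|L_{g_1}\|_2^2 - \|L_0\|_2^2 = \|L\|_2^2 - \|\Pi(A)J\|_2^2 = \|A - \Pi(A)\|_2^2,
\end{align*}
where the second equality is the variance decomposition, the third uses that conjugation by the orthogonal $g\in G$ preserves $\|\cdot\|_2$ (so $\|L_{g_1}\|_2 = \|L\|_2$ for every $g_1$), and the last uses $L = -AJ$, $L_0 = -\Pi(A)J$ together with the orthogonality of $J$. Tracking the numerical constant from the CLT normalisation (the factor $2$ comes from combining the real and ``imaginary'' fluctuation channels of the matrix-valued Brownian motion, exactly as in \cite{USMath}) gives $\E[\varepsilon(t)] \approx 2\tau t\|A - \Pi(A)\|_2^2$.

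Finally I would turn the equality into the stated bound. Here $\Pi$ is the orthogonal projection (with respect to $\langle X,Y\rangle = \tr{X^TY}$) onto the commutant of $G$ intersected with the symmetric matrices, so $\|A - \Pi(A)\|_2^2 \le \|A\|_2^2 \le 2n\,\|A\|_\infty^2$, giving $\E[\varepsilon(t)] \le 4\tau t n\|A\|_\infty^2$; a slightly more careful bound keeping the projection of $AJ$ rather than discarding it, or a factor from working in the doubled space $\Mat_2\otimes\Mat_n$, yields the stated $16\tau t n\|A\|_\infty^2$. For decoherence suppression the argument is verbatim with $G$ replaced by $\tilde G = \{\pm\unit_{2n_S}\}\oplus\unit_{2n_E}$ and $\Pi$ by $\tilde\Pi$ from \eqref{eq:tildePi}; one computes directly that $A - \tilde\Pi(A)$ has nonzero entries only in the off-diagonal interaction blocks $I, I^T$, so $\|A - \tilde\Pi(A)\|_2^2 = 2\|I\|_2^2 \le 2\cdot(2n_S)(2n_E)k^2$, and the CLT constant $2$ gives $2\tau t\|A-\tilde\Pi(A)\|_2^2 \le 16\tau t n_S n_B k^2$. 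I expect the main obstacle to be the rigorous transfer of the central limit theorem of \cite{USMath} from the unitary group to $\Sp(2n,\R)$ — in particular verifying that the relevant moment bounds and tightness estimates survive on a non-compact Lie group, and identifying the exact diffusion coefficient — which is why the detailed version of this step is deferred to Appendix \ref{app}; the remaining algebra (the variance identity and the norm estimates) is routine.
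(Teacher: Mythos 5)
Your proposal follows essentially the same route as the paper's Appendix \ref{app}: the increments of the random walk \eqref{eq:timeevolrandomdecd} are fed into the diffusion-limit theorem of \cite{USMath}, the leading $\tau t$ contribution to $\E[\varepsilon(t)]$ is identified as $t$ times the single-step variance, and your identity $\frac{1}{|G|}\sum_{g}\|g(A-\Pi(A))Jg^{-1}\|_2^2=\|A-\Pi(A)\|_2^2$ (via orthogonality of $g$ and $J$ and the projection property of $\Pi$) is exactly the paper's computation \eqref{eq:expcalc}, as is your treatment of the block structure of $A-\tilde{\Pi}(A)$ for decoherence suppression. The only point where you diverge is the provenance of the constant $16$: the paper obtains it from the cruder estimate $\|A-\Pi(A)\|_\infty\le 2\|A\|_\infty$ together with $\|X\|_2^2\le 2n\|X\|_\infty^2$, giving $\|A-\Pi(A)\|_2^2\le 8n\|A\|_\infty^2$, whereas your sharper bound $\|A-\Pi(A)\|_2^2\le\|A\|_2^2\le 2n\|A\|_\infty^2$ already yields $\E[\varepsilon(t)]\lesssim 4\tau t n\|A\|_\infty^2$ and hence the stated inequality a fortiori --- no ``doubled space'' factor or further correction is needed, so your closing speculation on that point can simply be dropped.
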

For a proof and more general treatment of bounds we refer to Appendix \ref{app}. Moreover, it would be interesting to study an adaption of the random procedure to Eulerian cycles as introduced in the context of homogenization, and to obtain explicit error bounds. Although we expect this to be feasible and meaningful, it goes far beyond the scope of the present article and provides interesting future work.

\subsection*{Simulations}
\begin{figure}
 (a)\phantom{\hspace{300pt}}
 \vspace{2pt}\\
 \includegraphics[width=0.7\columnwidth]{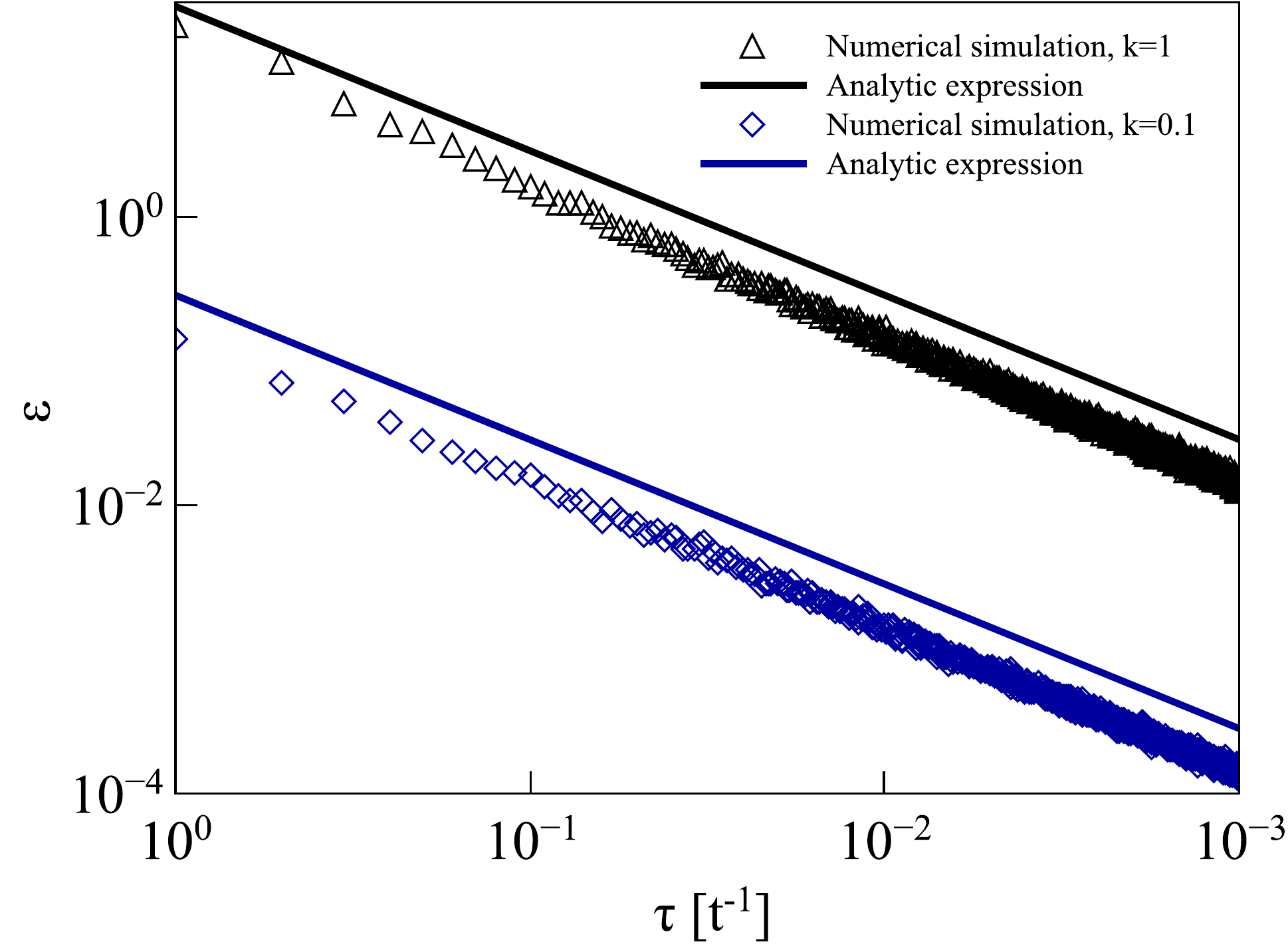}\\
 (b)\phantom{\hspace{300pt}}
 \vspace{2pt}\\
 \includegraphics[width=0.7\columnwidth]{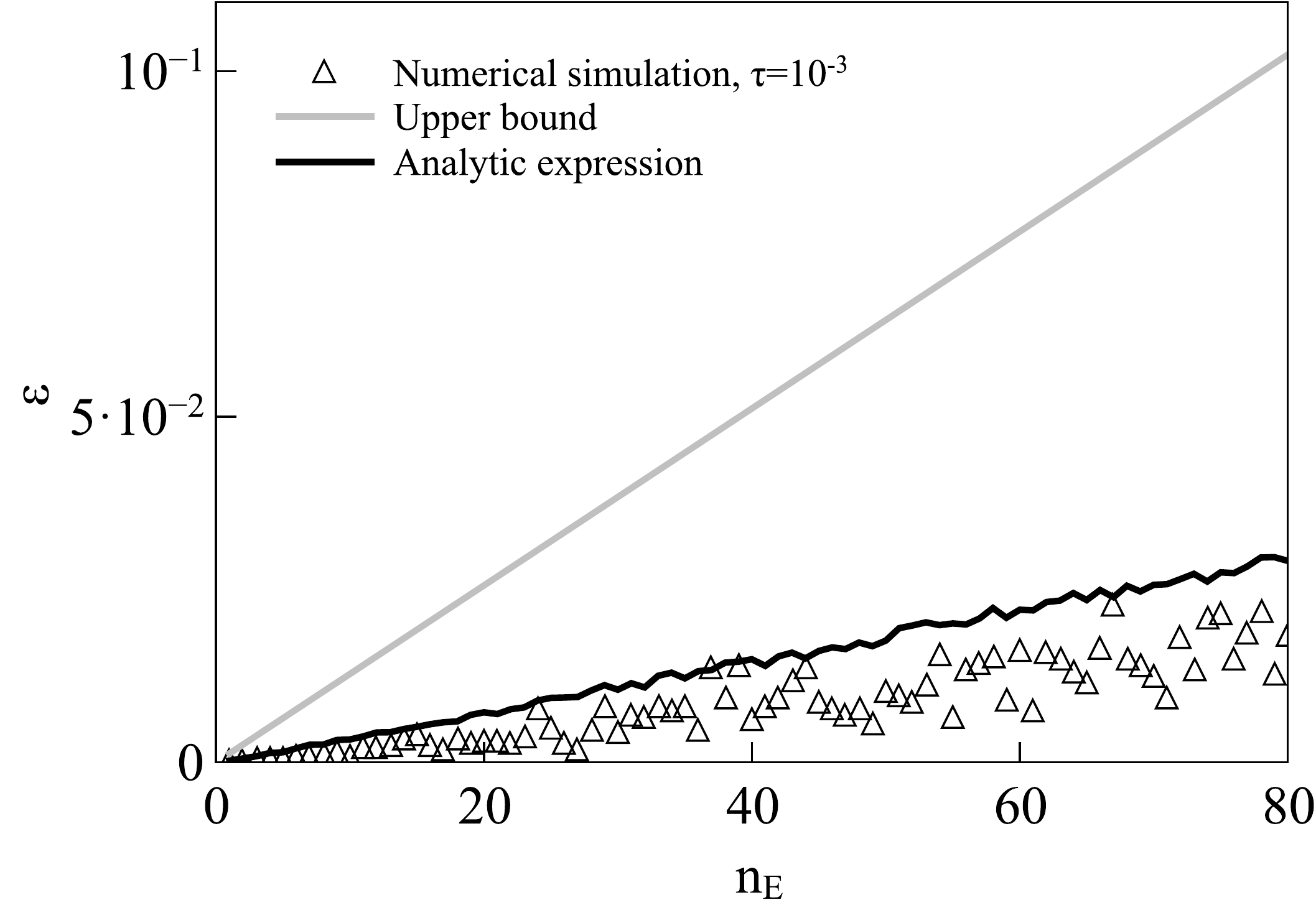}\\
 \caption{\label{fig:decouplingbounds} Numerical simulation of the averaged gate error for random dynamical decoupling. For homogenization (a) as function of the temporal spacing $\tau$ (on an inverse double logarithmic scale) between the decoupling operations and for the suppression of the system-environment interaction (b) as a function of the number of environmental oscillators $n_E$ for fixed $\tau=10^{-3}$. The average was taken over 20 trajectories for a total time $t=1$. (a): for a system of 4 interacting harmonic oscillators described by a randomly chosen matrix $A$ with entries bounded from above by $k=1$ (black triangles) and by $k=0.1$ (blue squares). The solid lines represents the corresponding analytic approximation depending on $A$ given by \eqref{eq:boundhom}. (b): for a system of two interacting harmonic oscillators coupled  to an environment consisting of $n_E=1,\ldots, 80$ oscillators (black triangles). The matrix $A$ describing the total system was chosen randomly with entries between 0 and 0.1. The solid lines show the more precise approximation depending on $A$ (black line) and the more general upper bound (grey line) in \eqref{eq:boundsuppression}, with $k=0.1$.}
\end{figure}
In order to illustrate and confirm the meaningfulness of our results, especially of Theorem \ref{th4}, let us now look at simulations for some standard models.

In Figure \ref{fig:decouplingbounds} we studied the validity of our analytic approximation in Theorem \ref{th4} by numerically evaluating the gate error \eqref{eq:decouplingerror} for homogenization (a) and for decoherence suppression of the system-environment interactions (b). In both plots we evaluated the gate error by taking the average over 20 trajectories that were obtained according to \eqref{eq:timeevolrandomdecd} for a total time $t=1$. For (a) the decoupling operations were taken independently and uniformly random from $G$ given by \eqref{eq:homogenisationset} and the expectation of the gate error was studied as a function of $\tau$. We investigated homogenization for $n=4$ interacting harmonic oscillators described by a randomly chosen matrix $A$ with entries between $0$ and $0.1$ (blue squares) and $0$ and $1$ (black triangles). In Figure \ref{fig:decouplingbounds} b) we studied the suppression of the system-environment interaction for a system of $n_S=2$ interacting harmonic oscillators as a function of the number $n_E$ of environmental oscillators for fixed $\tau=10^{-3}$ and taking the decoupling operations uniformly random from $G=\{\mathds{1}_{2n_{S}},~-\mathds{1}_{2n_{S}}\}$. The matrix $A$ describing the total system was chosen randomly with entries between $0$ and $0.1$. In both figures the solid lines represent the corresponding analytic expressions \eqref{eq:boundhom} and \eqref{eq:boundsuppression}. Remarkably, even for $\tau\Vert A \Vert\approx 1$ they describe very well the efficiency of the random decoupling scheme in terms of the temporal separation of the decoupling operations, the total evolution time and the entries of $A$. 

Exemplarily, a rough estimate can be given for optomechanical microresonators \cite{OptomechanicsExample}. Here the frequency of the mechanical oscillator and the cavity mode are in the Mhz regime and couplings of the order of kHz can be achieved between the mechanical oscillator and the cavity mode. The bound \eqref{eq:boundhom} suggests that for such systems $\tau \ll 1\,\mu\text{s}$ is required for homogenization of the motion of the mechanical oscillator and the cavity mode for a total time $t=1\,\mu\text{s}$. A time-scale analysis for decoherence suppression, for instance for a mechanical oscillator interacting with a bath of phonons, requires a microscopic model of the underlying decoherence mechanism. Particularly the order of the interaction between the mechanical oscillator and its environment or the spectral density of the environment need to be known.

\section{Conclusions}
We have studied dynamical decoupling for continuous variable systems that are described by quadratic Hamiltonians. We first proved that dynamical decoupling cannot work for every infinite-dimensional system. Thus, in contrast to finite-dimensional systems, not every infinite-dimensional system can be protected from decoherence using dynamical decoupling. Using the framework of symplectic transformations we afterwards investigated in more detail dynamical decoupling for quadratic Hamiltonians. We identified a condition and a set of operations that allows us to map a set of interacting harmonic oscillators onto non-interacting oscillators rotating with an averaged frequency. We called this process homogenization. Moreover we showed that every quadratic system environment-interaction can be suppressed with two simple operations without modifying the system dynamics at the same time. Using a random dynamical decoupling scheme we developed bounds characterizing how efficient both schemes are. We found that the efficiency depends on the temporal spacing of the decoupling operations, the total evolution time and the energy constants characterizing the considered system. Numerical simulations confirm the meaningfulness of the developed bounds.  

Our results pave the way for protecting an infinite-dimensional quantum system from decoherence which is induced by quadratic interactions. The reduction of noise in such systems is likely to find several applications. For instance, dynamical decoupling has the potential to decrease the environmentally induced errors in optical quantum computing \cite{OpticalQuantumC} and quantum metrology \cite{Metrology}. Moreover, dynamical decoupling for continuous variables might assist in verifying collapse models \cite{Bassi} in the macroscopic superposition regime \cite{MS1,MS2,MS3,MS4}. In particular, the reduction of noise caused by the interaction with the environment through dynamical decoupling might make the small derivations from the usual Schr\"odinger dynamics more visible \cite{USPRA}.

\appendix

\section{Construction of diffusion limit process and gate error bounds}\label{app}

The proof of Theorem \ref{th4} is similar in spirit to the diffusion limit theorem in \cite{USMath}, and we refer to that article for notation, motivation and procedure, in order to keep the present exposition succinct. We focus on the homogenization, and comment on decoherence suppression, which is treated analogously, towards the end.

To start with, we have to define the distribution of the increments in our random walk. A suitable choice is
\[
\mu_N := \frac{1}{|G|} \sum_{g\in G} 
\delta_{\exp\big( \frac{\tau}{\sqrt{N}}g (A-\Pi(A))J g^{-1} + \frac{\tau}{N}\Pi(A) J \big)}, \quad N\in\N ,
\]
This constitutes a family of measures on the Lie group $\Sp(2n,\R)$ with standard Borel $\sigma$-algebra, such that
\[
\mu_1 = \frac{1}{|G|} \sum_{g\in G} \delta_{\exp(\tau g A J g^{-1})}
\]
is the increment distribution of the actual time evolution operator $S^{(1)}(t)$ to $S^{(1)}(t+\tau)$, resulting from the instantaneous random homogenization operations as in \eqref{eq:timeevolrandomdecd}. 

We then consider the measures $(\mu_N)^{\ast N}$: an application of \cite[Th.3]{USMath}, cf. also \cite{Wehn}, shows that they converge to a normal distribution $\nu_1$ on $\Sp(2n,\R)$ as $N\ra\infty$, and $(S^{(N)}(t))_{t\in\R_+}$ converges to a Gaussian process $(S^{(\infty)}(t))_{t\in\R_+}$ on $\Sp(2n,\R)$, with distribution $(\nu_t)_{t\in\R_+}$ such that $\nu_0=\delta_{\unit_{2n}}$. The corresponding (dual) contraction semigroup has infinitesimal generator
\[
L = D_{\Pi(A)J} + \frac{\tau}{|G|} \sum_{g\in G} (D_{g (A-\Pi(A))Jg^{-1}})^2.
\]
The process $(S^{(\infty)}(t))_{t\in\R_+}$ cannot be expressed explicitly, but the expectation values of its matrix elements and higher moments can, thanks to the expression for $L$: we can apply it to the functions
\[
f_{kl}(g):= \langle e_k, g(e_l) \rangle, \quad g\in G,
\]
and
\[
f_{ij,kl}(g):= \langle e_i\otimes e_j, g(e_k)\otimes g(e_l) \rangle, \quad g\in G,
\]
where $(e_k)_{k=1,...,2n}$ forms an orthonormal basis of $\R^{2n}$.

More precisely, we get
\begin{align}\label{eq:genexp}
\E[\varepsilon(t)] =& \E[ \|S_0(t)-S^{(\infty)}(t) \|_2^2 \nonumber ]\\ \nonumber
=& \sum_{k,l=1}^{2n} \E\Big[ |\langle e_k, S^{(\infty)}(t)(e_l)\rangle|^2 
+|\langle e_k, S_0(t)(e_l)\rangle|^2 \\ \nonumber
& \quad -\langle e_k, S^{(\infty)}(t)(e_l)\rangle \overline{\langle e_k, S_0(t)(e_l)\rangle}
-\overline{\langle e_k, S^{(\infty)}(t)(e_l)\rangle} \langle e_k, S_0(t)(e_l)\rangle \Big]\\
=& \sum_{k,l=1}^{2n} \langle (e_k\otimes e_l), e^{t \hat{L}^{(2)}}(e_l\otimes e_k)\rangle + 2n \\ \nonumber
& \quad -\langle e_k, e^{t\hat{L}} (e_l)\rangle \langle e_l, S_0(-t)(e_k)\rangle
-\langle e_l, e^{t\hat{L}^\dagger}(e_k)\rangle \langle e_k, S_0(t)(e_l)\rangle, \nonumber
\end{align}
where
\[
\hat{L}= -\Pi(A)J + \frac{\tau}{|G|} \sum_{g\in G} \big(g (A-\Pi(A))Jg^{-1}\big)^2
\]
and
\begin{align*}
\hat{L}^{(2)}=& -\Pi(A)J\otimes \unit_{2n} - \unit_{2n}\otimes\Pi(A)J^\dagger\\ 
&+ \frac{\tau}{|G|} \sum_{g\in G} \Big( \big(g (A-\Pi(A))Jg^{-1}\big)^2\otimes \unit_{2n}
+ \unit_{2n}\otimes \big((g (A-\Pi(A))J g^{-1})^\dagger\big)^2 \\
& \qquad + 2 g (A-\Pi(A))Jg^{-1}\otimes (g (A-\Pi(A)) Jg^{-1})^\dagger \Big).
\end{align*}

Equation \eqref{eq:genexp} is the precise expression for $\E[\varepsilon(t)]$, which may actually be used for computer programs if $A$ is explicitly known and the dimension $n$ is reasonably small. In most other circumstances it makes sense to simplify \eqref{eq:genexp} under the physically realistic assumption that $\tau \|A\|_{\infty} \ll 1$. A first order expansion of all the exponentials in \eqref{eq:genexp} and the fact that $J^{-1}=J^\dagger$ and $g^{-1}=g^\dagger$, for all $g\in G$, lead to
\begin{align}\label{eq:expcalc}
\E[\varepsilon(t)] = & \nonumber
\frac{2\tau t}{|G|} \sum_{k,l=1}^{2n}\sum_{g\in G} 
\langle e_k, g(A-\Pi(A))J g^{-1} e_l \rangle \langle e_l, (g(A-\Pi(A))J g^{-1})^\dagger e_k \rangle + O(\tau^2 t^2 \|A\|_\infty^4)\\
\approx &  \frac{2\tau t}{|G|} \sum_{g\in G} \|g(A-\Pi(A))J g^{-1} \|_2^2\\
=& 2\tau t \|A-\Pi(A)\|_2^2.\nonumber
\end{align}
We notice that $\|\Pi(A)\|_\infty\le \|A\|_\infty$, so
\[
\|A-\Pi(A)\|_2^2 = \tr{(A-\Pi(A))^\dagger (A-\Pi(A))} \le \tr{ (2\|A\|_\infty)^2\unit_{2n}} = 8n\|A\|_\infty^2,
\]
which completes the proof of \eqref{eq:boundhom} for homogenization.

In the case of decoherence suppression instead, \eqref{eq:expcalc} becomes
\[
\E[\varepsilon(t)] \approx 2\tau t \|A-\tilde{\Pi}(A)\|_2^2.
\]
Now we note that 
\[
(A-\tilde{\Pi}(A))^\dagger (A-\tilde{\Pi}(A)) = \begin{pmatrix}
I I^T & 0 \\ 0 & I^T I
\end{pmatrix}.
\]
The maximum entry of $I I^T$ is bounded above by $2n_E k^2$, the maximum entry of $I^T I$ instead by $2n_S k^2$, with $k$ being the greatest absolute value of system-environment coupling entries in the matrix $I$. Thus
\[
\|A-\tilde{\Pi}(A)\|_2^2 = \tr{\begin{pmatrix}
I I^T & 0 \\ 0 & I^T I
\end{pmatrix}
}
\le 2n_S \cdot 2n_E k^2 + 2n_E \cdot 2 n_S k^2 = 8 n_S n_E k^2,
\] 
which proves the last part of Theorem \ref{th4}.



\end{document}